\newtheorem{Thm}{Theorem}[section]
\newtheorem{Pro}[Thm]{Proposition}
\theoremstyle{definition}
\theoremstyle{remark}
\newtheorem{Rmk}[Thm]{Remark}
\DeclareMathOperator\arccosh{arccosh}
\renewcommand{\d}{\partial}
\newcommand{\eps}{\varepsilon}
\begin{document}
\title{Stochastic differential equations for Lie group valued moment maps}
\author{Anton Alekseev\and Elizaveta Arzhakova\and Daria Smirnova}

\newcommand{\Addresses}{{
  \bigskip
  \footnotesize

  \textsc{AA: Section of Mathematics, University of Geneva, 2-4 rue du Li\`evre, c.p. 64, 1211 Gen\`eve 4, Switzerland}\par\nopagebreak
  \textit{E-mail address}: \texttt{Anton.Alekseev@unige.ch}

  \medskip  
  
  \textsc{EA: Mathematical Institute, University of Leiden, P.O. Box 9512, 2300 RA Leiden, The Netherlands}\par\nopagebreak
  \textit{E-mail address}: \texttt{e.arzhakova@math.leidenuniv.nl}
  
  \medskip

  \textsc{DS: Section of Mathematics, University of Geneva, 2-4 rue du Li\`evre, c.p. 64, 1211 Gen\`eve 4, Switzerland}\par\nopagebreak
  \textit{E-mail address}: \texttt{daria.smirnova@unige.ch}

}}
\date{}

\begin{abstract}
The celebrated result by Biane-Bougerol-O'Connell relates  Duistermaat-Heckman (DH) measures for coadjoint orbits of a compact Lie group $G$ with the multi-dimensional Pitman transform of the Wiener process on its Cartan subalgebra. The DH  theory admits several non-trivial generalizations. In this paper, we consider the case of $G=SU(2)$, and we give an interpretation of DH measures for $SU(2) \cong S^3$ valued moment maps in terms of an interesting stochastic process on the unit disc, and an interpretation of the DH measures for Poisson $\mathbb{H}^3$ valued moment maps (in the sense of Lu) in terms of a stochastic process on the interior of a hyperbola.

\end{abstract}

\maketitle


\section{Introduction: the Wiener process on $\mathbb{R}^3$ and Duistermaat-Heckman measures of $S^2$}

The vector space $\mathbb{R}^3$ carries two amazing geometric structures. On the one hand, it has a Euclidean metric 
$$
g=dx^2 + dy^2 + dz^2.
$$
On the other hand, it is isomorphic to the dual space of the Lie algebra $SU(2)$ and it carries the linear Kirillov-Kostant-Souriau (KKS) Poisson structure \cite{Kir99, Sou70}
$$
\{ x, y\} = z, \hskip 0.3cm \{ y, z\} = x, \hskip 0.3cm \{ z, x\} =y.
$$
{\em A priori}, these two structures are unrelated, but there is an interesting connection between them that we will now describe.

First, consider the Poisson structure on $\mathbb{R}^3$. It gives rise to a generalized foliation by symplectic leaves. In this case, the leaves are 2-spheres centered at the origin. They carry symplectic forms which are given by formula
$$
\omega = d\phi \wedge dz,
$$
where $x+iy=\sqrt{r^2-z^2} e^{i\phi}$ and $r$ is the radius of the sphere. The 2-form $\omega$ is a rotation invariant volume form on the sphere and its push-forward to the $z$-axis is called the Duistermaat-Heckman measure \cite{DH82}:
$$
{\rm DH}_r= 2\pi \chi_{[-r, r]}(z) dz.
$$
Here $\chi_{[-r,r]}(z)$ is the characteristic function of the segment $[-r,r]$. The mass of this measure is equal to the symplectic volume of the sphere given by ${\rm Vol}(S^2, \omega)=4\pi r$. The normalized measure
$$
\frac{1}{{\rm Vol}(S^2, \omega)} {\rm DH}_r= \frac{1}{2r} \chi_{[-r, r]}(z) dz
$$
is a probability measure.

Next, use the Euclidean metric on $\mathbb{R}^3$ to define a Wiener process which starts at the origin. We will consider two projections of this process: the first one under the map 
$$
r: (x,y,z) \mapsto r=\sqrt{x^2+y^2+z^2}
$$
and the second one under the map $(x,y,z) \mapsto (r,z)$. These two projections are described by the following properties:

\begin{Thm}  \label{thm:Bessel_R^3}
The projection $r$ of the Wiener process on $\mathbb{R}^3$ is a Markov process described by the stochastic differential equation
\begin{equation}  \label{eq:Bessel_R^3}
dr_t = dB_t+ \frac{1}{r_t} dt,
\end{equation}
where $B_t$ is the standard Wiener process on $\mathbb{R}$.
\end{Thm}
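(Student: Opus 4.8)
The plan is to recognize $r_t$ as a three‑dimensional Bessel process and to derive its SDE directly from It\^o's formula combined with L\'evy's characterization of Brownian motion. Write $W_t=(X_t,Y_t,Z_t)$ for the Wiener process on $\mathbb{R}^3$, so that $X,Y,Z$ are independent standard one‑dimensional Wiener processes starting at $0$, and set $f(x,y,z)=\sqrt{x^2+y^2+z^2}$ so that $r_t=f(W_t)$. Away from the origin, $\partial_x f=x/r$ (and cyclically), while the Laplacian is
\[
\Delta f=\frac{3r^2-(x^2+y^2+z^2)}{r^3}=\frac{2}{r}.
\]
Applying It\^o's formula to $r_t=f(W_t)$, and using $d\langle X\rangle_t=d\langle Y\rangle_t=d\langle Z\rangle_t=dt$ together with the vanishing of the cross‑variations, gives
\[
dr_t=\frac{X_t\,dX_t+Y_t\,dY_t+Z_t\,dZ_t}{r_t}+\tfrac12\,\Delta f(W_t)\,dt=\frac{X_t\,dX_t+Y_t\,dY_t+Z_t\,dZ_t}{r_t}+\frac{1}{r_t}\,dt.
\]

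Next I would isolate the martingale part as the continuous local martingale
\[
B_t=\int_0^t\frac{X_s\,dX_s+Y_s\,dY_s+Z_s\,dZ_s}{r_s},\qquad B_0=0,
\]
and compute its quadratic variation, which collapses to
\[
d\langle B\rangle_t=\frac{X_t^2+Y_t^2+Z_t^2}{r_t^2}\,dt=dt.
\]
By L\'evy's characterization theorem a continuous local martingale with $B_0=0$ and $\langle B\rangle_t=t$ is a standard one‑dimensional Wiener process, which yields the asserted equation $dr_t=dB_t+r_t^{-1}\,dt$. The Markov property then follows because $r_t$ solves an autonomous SDE whose drift and diffusion coefficients depend on the path only through the current value $r_t$; equivalently, it is inherited from the rotational invariance of $W_t$.

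The main technical point is the singularity of the drift coefficient $r^{-1}$ at the origin, which is precisely the starting point of the process. The stochastic‑integral part causes no trouble, since the integrands $X_s/r_s,\,Y_s/r_s,\,Z_s/r_s$ are all bounded by $1$, so $B_t$ is in fact a genuine martingale and the It\^o integrals are well posed. What must be checked separately is that the drift integral $\int_0^t r_s^{-1}\,ds$ is almost surely finite. This follows from the fact that in dimension three the origin is polar: the Wiener process leaves $0$ instantaneously and almost surely never returns, so $r_s>0$ for $s>0$, while the small‑time scaling $r_s\sim\sqrt{s}$ makes $r_s^{-1}$ integrable near $s=0$. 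With this in hand, It\^o's formula—a priori valid only on the open set $\{r>0\}$—extends to all $t>0$, completing the argument.
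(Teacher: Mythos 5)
Your proposal is correct. Note that the paper itself does not prove this theorem: Remark 1.2 simply refers the reader to the cited references (McKean; Revuz--Yor), since the statement is the classical identification of $|W_t|$ with the three-dimensional Bessel process. Your argument --- It\^o's formula applied to $f(x,y,z)=\sqrt{x^2+y^2+z^2}$, giving drift $\tfrac12\Delta f = 1/r$, followed by L\'evy's characterization to identify the local martingale part $\int_0^t r_s^{-1}(X_s\,dX_s+Y_s\,dY_s+Z_s\,dZ_s)$ as a standard Wiener process --- is exactly the standard proof found in those references, and it is also the flat-space analogue of what the paper does for the $S^3$ case (write the SDE in coordinates, read off the correlation structure to extract a one-dimensional Brownian motion, then change variables by It\^o). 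A point in your favor is that you address the singularity of the drift at the origin, which the paper never discusses; the one place I would tighten your write-up is the claim that ``the small-time scaling $r_s\sim\sqrt{s}$ makes $r_s^{-1}$ integrable,'' which as stated is a heuristic. A clean justification is that $\mathbb{E}[r_s^{-1}]=c\,s^{-1/2}$ is locally integrable in $s$, so by Tonelli $\int_0^t r_s^{-1}\,ds<\infty$ almost surely; combined with polarity of the origin in dimension three, this legitimizes applying It\^o's formula up to time $0$. With that small repair the proof is complete.
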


\begin{Rmk}
For the proof of Theorem \ref{thm:Bessel_R^3} see {\em e.g.} \cite{McK60, RY91}. The process \eqref{eq:Bessel_R^3} is the 3-dimensional Bessel process.
\end{Rmk}

\begin{Thm}
The projection $(r,z)$ of the Wiener process on $\mathbb{R}^3$ is a Markov process described by the following system of stochastic differential equations
\begin{equation}  \label{system_R^3}
\begin{array}{lll}
dr_t & = & \frac{\sqrt{r_t^2-z_t^2}}{r_t} dB_t^{(1)} + \frac{z_t}{r_t} dB_t^{(2)} + \frac{1}{r_t} dt,\\
dz_t & = & dB^{(2)}_t,
\end{array}
\end{equation}
where $B^{(1,2)}_t$ are two independent Wiener processes on $\mathbb{R}$.
\end{Thm}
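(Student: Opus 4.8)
The plan is to realize $(r_t,z_t)$ as an explicit functional of the driving $\mathbb{R}^3$-valued Wiener process $W_t=(W^1_t,W^2_t,W^3_t)$ and then to apply It\^o's formula together with L\'evy's characterization of Brownian motion. Writing $r_t=|W_t|=\sqrt{(W^1_t)^2+(W^2_t)^2+(W^3_t)^2}$ and $z_t=W^3_t$, the second equation is immediate with $B^{(2)}_t:=W^3_t$. For the first equation I would apply It\^o's formula to the smooth function $r=|w|$ away from the origin; since $\partial_i r = W^i/r$ and $\tfrac12\Delta r = 1/r$ in dimension three, this yields
\[
dr_t = dN_t + \frac{1}{r_t}\,dt,\qquad dN_t:=\sum_{i=1}^3 \frac{W^i_t}{r_t}\,dW^i_t ,
\]
so the drift $\tfrac1{r_t}\,dt$ already matches. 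This is exactly the computation behind the Bessel process of Theorem \ref{thm:Bessel_R^3}, and indeed $d\langle N\rangle_t=dt$.

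The key step is to split the martingale part $N$ into the component correlated with $z=W^3$ and a component orthogonal to it. Keeping $B^{(2)}_t=W^3_t$, I would define the second driving process by
\[
dB^{(1)}_t := \frac{r_t}{\sqrt{r_t^2-z_t^2}}\Big(dN_t-\frac{z_t}{r_t}\,dB^{(2)}_t\Big)=\frac{W^1_t\,dW^1_t+W^2_t\,dW^2_t}{\sqrt{(W^1_t)^2+(W^2_t)^2}},
\]
the last equality holding because the $W^3$-terms cancel. A direct quadratic-variation computation gives $d\langle B^{(1)}\rangle_t=dt$ and $d\langle B^{(2)}\rangle_t=dt$, while the cross-variation $d\langle B^{(1)},B^{(2)}\rangle_t$ vanishes because $B^{(1)}$ is built from $W^1,W^2$ only, which are independent of $W^3$. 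By L\'evy's characterization $(B^{(1)},B^{(2)})$ is therefore a standard two-dimensional Wiener process, and inverting the defining relation gives $dN_t=\tfrac{\sqrt{r_t^2-z_t^2}}{r_t}\,dB^{(1)}_t+\tfrac{z_t}{r_t}\,dB^{(2)}_t$, which is precisely the martingale part of the claimed first equation.

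Finally, the Markov property follows because the resulting system is autonomous: the diffusion and drift coefficients depend only on the current state $(r_t,z_t)$, so $(r_t,z_t)$ is a diffusion driven by the two-dimensional Wiener process $(B^{(1)},B^{(2)})$. I expect the only genuine subtlety to lie in the treatment of the two singular loci of the coefficients: the origin $r=0$, where the $1/r$ drift blows up (already present for the Bessel process of Theorem \ref{thm:Bessel_R^3}), and the set $r=|z|$, equivalently $(W^1,W^2)=(0,0)$, where $\sqrt{r^2-z^2}$ vanishes and the integrand defining $B^{(1)}$ degenerates. The latter is harmless: the It\^o integrands $W^i/\sqrt{(W^1)^2+(W^2)^2}$ are bounded by $1$, so no integrability problem arises, and since a planar Brownian motion issued from the origin almost surely avoids the origin at all positive times, these integrands are defined for Lebesgue-almost every $s$. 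With these points dealt with, It\^o's formula and L\'evy's characterization apply on the complement of a negligible set and complete the argument.
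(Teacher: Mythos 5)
Your argument is correct and follows essentially the same route as the paper: the paper omits the proof of this theorem, referring instead to the analogous $S^3$ computation, which likewise applies It\^o's formula to the coordinate functions of the ambient Wiener process and then reads off the coefficients of two independent Wiener processes from the resulting correlation (diffusion) matrix. Your explicit construction of $B^{(1)}$ followed by L\'evy's characterization is just a more careful rendering of that last step, and your remarks on the singular loci $r=0$ and $r=|z|$ are sound.
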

The following statement establishes a surprizing relation between the system of stochastic differential equation \eqref{system_R^3} and the Duistermaat-Heckman measure:

\begin{Thm} \label{conditional_R^3}
The conditional probability density for $z_t$ for a fixed value of $r_t$ is given by the normalized Duistermaat-Heckman measure:
$$
\rho_{z_t} (z| r_t=r) = \frac{1}{{\rm Vol}(S^2, \omega)} {\rm DH}_r.
$$
\end{Thm}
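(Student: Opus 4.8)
The plan is to bypass the stochastic differential equations \eqref{system_R^3} entirely and argue directly from the transition density of the Wiener process, whose rotational symmetry does all the work. At time $t$ the process started at the origin has the Gaussian density
$$
p_t(x,y,z)=\frac{1}{(2\pi t)^{3/2}}\, e^{-(x^2+y^2+z^2)/(2t)}=\frac{1}{(2\pi t)^{3/2}}\, e^{-r^2/(2t)},
$$
which depends on $(x,y,z)$ only through $r$. The entire argument rests on this single observation: conditioned on $r_t=r$, the process is uniformly distributed on the sphere $S^2_r$ with respect to its rotation-invariant surface measure.

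To extract the law of $z_t$ I would pass to the coordinates $(r,z,\phi)$ adapted to the spheres, namely $x=\sqrt{r^2-z^2}\cos\phi$, $y=\sqrt{r^2-z^2}\sin\phi$, with $r>0$, $z\in[-r,r]$ and $\phi\in[0,2\pi)$. A direct Jacobian computation gives
$$
dx\,dy\,dz = r\, dr\, dz\, d\phi,
$$
so that the surface element of $S^2_r$ is $r\,dz\,d\phi$. This is Archimedes' hat-box theorem, and it is exactly the statement that both the Euclidean surface measure and the symplectic form $\omega=d\phi\wedge dz$ project onto a measure proportional to $dz$ on $[-r,r]$.

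With this change of variables the joint density of $(r,z,\phi)$ with respect to $dr\,dz\,d\phi$ equals $\frac{r}{(2\pi t)^{3/2}}e^{-r^2/(2t)}$ on the region $z\in[-r,r]$, independent of both $z$ and $\phi$. Integrating out $\phi\in[0,2\pi)$ yields the joint density of $(r,z)$, and integrating the latter over $z\in[-r,r]$ yields the marginal density of $r$. Forming the ratio, the Gaussian factor cancels and the conditional density of $z_t$ given $r_t=r$ collapses to
$$
\rho_{z_t}(z\mid r_t=r)=\frac{1}{2r}\,\chi_{[-r,r]}(z),
$$
which is precisely $\frac{1}{\mathrm{Vol}(S^2,\omega)}\,\mathrm{DH}_r$ in view of $\mathrm{Vol}(S^2,\omega)=4\pi r$. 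Note that the answer is independent of $t$, as it must be.

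The computation itself is elementary; the only point requiring care is the meaning of conditioning on the measure-zero event $\{r_t=r\}$, which I would make rigorous via the standard disintegration of the joint law of $(r_t,z_t)$ into its $r$-marginal and the $z$-conditional (equivalently, as the ratio of joint to marginal densities used above). I expect no serious obstacle here: the heart of the matter is the coincidence, guaranteed by Archimedes' theorem, between the $z$-pushforward of the rotation-invariant (Euclidean) probability measure on $S^2_r$ and the normalized Duistermaat--Heckman (symplectic) measure, both of which are uniform on $[-r,r]$.
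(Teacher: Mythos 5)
Your proof is correct, but it takes a genuinely different route from the paper's. The paper deliberately omits a proof of this theorem, stating that it follows the same pattern as the proofs of Theorems \ref{conditional_S^3} and \ref{conditional_H^3}: there, one writes down two Fokker--Planck equations, one for the marginal process ($x_t$, resp. $w_t$) and one for the joint process, verifies that the product ansatz ``marginal density times normalized DH density'' intertwines the two equations, and then invokes uniqueness of solutions with approximate delta initial data to conclude. You instead bypass the SDEs and the Fokker--Planck machinery entirely and exploit the fact that the heat kernel of $\mathbb{R}^3$ started at the origin is radial, so that conditioning on $r_t=r$ gives the uniform (rotation-invariant) distribution on $S^2_r$; the theorem then reduces to Archimedes' hat-box computation $dx\,dy\,dz=r\,dr\,dz\,d\phi$, which identifies the $z$-pushforward of the uniform measure on the sphere with the normalized DH measure. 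Your Jacobian and the final identification with $\frac{1}{4\pi r}\,{\rm DH}_r$ both check out, and the answer being $t$-independent is a good sanity check. What each approach buys: yours is shorter, fully elementary, and makes the geometric mechanism (radial heat kernel plus invariance of the conditional law) transparent; it also generalizes to $S^3$ and $\mathbb{H}^3$, where the heat kernel is again a function of distance from the starting point alone, provided one separately knows the pushforward of the invariant orbit measure --- which is exactly the content of the DH propositions in the paper. The paper's Fokker--Planck argument, by contrast, works directly at the level of the SDE systems such as \eqref{system_R^3}, which are the actual objects of interest there, and does not require an explicit formula for the heat kernel. The only point needing care in your write-up, which you correctly flag, is the disintegration justifying conditioning on the null event $\{r_t=r\}$; the ratio-of-densities formulation you give is the standard and adequate resolution.
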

We do not prove these two theorems here since we give proofs of similar (and somewhat more involved) statements in the body of the paper. The purpose of this paper is to generalize the above results to the cases of spherical and hyperbolic geometry in dimension 3. In both cases, there is a Riemannian metric which gives rise to a well-defined Wiener process.
The Duistermaat-Heckman measure also admits generalizations and analogues of Theorem \ref{conditional_R^3} hold true.

Our work is inspired by another way to relate Wiener processes to the Duistermaat-Heckman measure \cite{BBC08, Bia09}. In more detail,
$$
\rho_{B_t}(z|r_t=PB_t) = \frac{1}{{\rm Vol}(S^2, \omega)} {\rm DH}_r,
$$
where $PB_t$ is the Pitman transform (for definition, see \cite{Pit75}) of the Wiener process on $\mathbb{R}$. It would be desirable to have similar results in the case of spherical and hyperbolic geometry, but to the best of our knowledge they are currently not available.

In the process of completing this paper, we learnt that related results were obtained in \cite{CC}.

\vskip 0.2cm

{\bf Acknowledgements.} We are grateful to D. Chelkak and L. Parnovksi for useful comments and suggestions.
The research of AA and DS was supported in part by the grants 178794, 178828, 182767  and by the NCCR SwissMAP of the Swiss National Science Foundation. The research of AA was supported in part by the project MODFLAT of the European Research Council (ERC). 
This work is partly based on the Master thesis of EA at the University of Geneva.

\section{Wiener process on $S^3$ and group valued moment maps}

In this section, we consider the 3-sphere $S^3$ which replaces the Euclidean space $\mathbb{R}^3$ of the previous section. It is convenient to identify $S^3$ with the Lie group 
$$
SU(2)=\left\{
g= \left(  
\begin{array}{rr}
a& b \\
- \bar{b} & \bar{a}
\end{array}
\right); |a|^2 + |b|^2 =1 \right\} .
$$
It carries a quasi-Poisson structure (see \cite{AKSM02}) with leaves the conjugacy classes. Conjugacy classes in $SU(2)$ are the points 
$e=\left(
\begin{array}{rr}
1 & 0 \\
0 & 1
\end{array}
\right)$
and $-e$ and spheres given by matrices of fixed trace. Consider the map $a: g \mapsto a$ which picks the left upper corner matrix element of $g$. It is convenient to introduce a Cartesian and polar coordinate systems for 
$$
a=x+iy = \rho e^{i\phi}.
$$
If we denote by $\lambda=e^{i\theta}$ the eigenvalue of $g$ with non-negative imaginary part, we have
$$
x = \rho \cos(\phi) = \cos(\theta).
$$
According to \cite{AMW02}, conjugacy classes in $G=SU(2)$ carry canonical volume forms $\omega_\theta$. The following statement replaces the formula for the Duistermaat-Heckman measure:
\begin{Pro}
Let $\mathcal{C}_\theta = \{ g \in SU(2); {\rm Tr}(g)=2 \cos(\theta)\} $ be a conjugacy class in $SU(2)$. Then,
$$
a(\mathcal{C}_\theta) =\{ \cos{\theta} + iy; y\in [-\sin(\theta), \sin(\theta)] \} .
$$
Furthermore, 
\begin{equation} \label{DH_SU(2)}
{\rm DH}_\theta:= a_*(\omega_\theta) = 2\pi \chi_{[-\sin(\theta), \sin(\theta)]} dy
\end{equation}
\end{Pro}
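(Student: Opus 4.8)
The plan is to first pin down the image $a(\mathcal{C}_\theta)$ by elementary matrix algebra, and then to compute the pushforward measure by recognizing $\mathcal{C}_\theta$ as a round sphere on which $a$ is a height function. For the image, observe that $\mathrm{Tr}(g)=a+\bar a=2\,\mathrm{Re}(a)=2x$, so the condition $\mathrm{Tr}(g)=2\cos\theta$ is equivalent to $x=\cos\theta$. Since $|a|^2+|b|^2=1$ forces $x^2+y^2=|a|^2\le 1$, we get $y^2\le 1-\cos^2\theta=\sin^2\theta$, i.e. $y\in[-\sin\theta,\sin\theta]$. Conversely, for any such $y$ one puts $a=\cos\theta+iy$ and chooses $b$ with $|b|^2=\sin^2\theta-y^2\ge 0$; the resulting matrix lies in $SU(2)$ and has trace $2\cos\theta$, hence belongs to $\mathcal{C}_\theta$. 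This gives $a(\mathcal{C}_\theta)=\{\cos\theta+iy:\ y\in[-\sin\theta,\sin\theta]\}$.

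For the measure I would use the coordinates $(y,\beta)$ on $\mathcal{C}_\theta$ defined by $a=\cos\theta+iy$ and $b=\sqrt{\sin^2\theta-y^2}\,e^{i\beta}$. In the ambient coordinates $(\mathrm{Re}\,a,\mathrm{Im}\,a,\mathrm{Re}\,b,\mathrm{Im}\,b)$ on $S^3$, the class $\mathcal{C}_\theta=\{\mathrm{Re}\,a=\cos\theta\}\cap S^3$ is exactly the round $2$-sphere $\{y^2+|b|^2=\sin^2\theta\}$ of radius $\sin\theta$, and $y=\mathrm{Im}\,a$ is a Cartesian height coordinate on it. Moreover $a$ restricted to $\mathcal{C}_\theta$ factors through $y$, so $a_*\omega_\theta$ is precisely the pushforward of $\omega_\theta$ under this height. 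The key structural input is that the conjugation action of $SU(2)$ preserves $\mathcal{C}_\theta$ and acts there by isometries of the round metric (it acts by isometries of the bi-invariant metric of $S^3$); since $\mathcal{C}_\theta\cong SU(2)/T\cong S^2$ is a $2$-dimensional homogeneous space, the space of invariant $2$-forms is one-dimensional, so the canonical volume form $\omega_\theta$ of \cite{AMW02} must be a constant multiple of the round area form.

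Now Archimedes' hatbox theorem says that the pushforward of the round area form of a sphere under a Cartesian height is uniform. Hence $a_*\omega_\theta=c\,\chi_{[-\sin\theta,\sin\theta]}\,dy$ for some constant $c$, and it only remains to compute $c$. This constant equals the total mass divided by the length of the segment, $c=\mathrm{Vol}(\mathcal{C}_\theta,\omega_\theta)/(2\sin\theta)$. I would fix it from \cite{AMW02}, either by evaluating the total symplectic volume $\int_{\mathcal{C}_\theta}\omega_\theta=4\pi\sin\theta$, or --- more in the spirit of a moment-map computation --- by checking directly from the explicit formula for $\omega_\theta$ that the circle action $\beta\mapsto\beta+\mathrm{const}$ generated by the maximal torus is Hamiltonian with moment map exactly $y$, i.e. $\iota_{\partial_\beta}\omega_\theta=dy$ with $\partial_\beta$ of period $2\pi$. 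The latter gives $\omega_\theta=dy\wedge d\beta$ outright, and integrating out $\beta\in[0,2\pi)$ yields $c=2\pi$ at once.

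I expect the main obstacle to be precisely this normalization: one must respect the exact conventions of \cite{AMW02} for the canonical form $\omega_\theta$, equivalently pin down the scalar $c(\theta)$ in $\omega_\theta=c(\theta)\cdot(\text{round area form})$ as $c(\theta)=1/\sin\theta$. Everything else --- the image computation, the invariance argument, and the Archimedes projection --- is robust and normalization-independent; it is only the overall constant $2\pi$ that requires care. The cleanest way to obtain it is through the statement that the torus moment map is $y=\mathrm{Im}(a)$, which is the $S^3$-analogue of the fact that on a coadjoint sphere in $\mathbb{R}^3$ the height $z$ generates the rotation and $\iota_{\partial_\phi}\omega=dz$.
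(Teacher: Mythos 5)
The paper states this Proposition without proof, so there is no argument of record to compare against; the surrounding text only asserts ${\rm Vol}(\mathcal{C}_\theta)=4\pi\sin\theta$ and cites \cite{AMW02} for the canonical form $\omega_\theta$. Your argument is correct and is the natural one: the image computation is elementary and complete; identifying $\mathcal{C}_\theta$ with the round $2$-sphere $\{y^2+|b|^2=\sin^2\theta\}$ on which $y=\mathrm{Im}(a)$ is a height, invoking invariance under the isometric conjugation action to force $\omega_\theta$ to be a multiple of the round area form, and then applying Archimedes' hatbox theorem correctly reduces everything to one scalar. You also correctly isolate the only genuinely convention-dependent point, the normalization $\omega_\theta=\frac{1}{\sin\theta}\cdot(\text{round area form})$, equivalently $\omega_\theta=dy\wedge d\beta$ with $y$ the torus moment map; this is consistent with the paper's own assertion that the total mass is $4\pi\sin\theta$, and with the Euclidean model $\omega=d\phi\wedge dz$ from the introduction. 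Either of your two proposed ways of fixing the constant closes the argument, so I see no gap.
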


The volume of the conjugacy class $\mathcal{C}_\theta$ is given by $4\pi \sin(\theta)$, and it gives the total mass of the measure \eqref{DH_SU(2)}. The normalized measure
$$
\frac{1}{{\rm Vol}(\mathcal{C}_\theta)} {\rm DH}_\theta = \frac{1}{2 \sin(\theta)} \chi_{[-\sin(\theta), \sin(\theta)]} dy
$$
is a probability measure.

The space $SU(2) \cong S^3$ has the unique (up to multiple) bi-invariant metric. Consider the Wiener process under this metric which starts at the group unit $e$. We will again consider two projections $\theta: SU(2) \to [0, \pi]$ and $a: SU(2) \to D \subset \mathbb{C}$, where $D$ is the unit disc. These projections have the following properties:

\begin{Thm} \label{SU2_cot}
The projection $\theta$ of the Wiener process on $S^3$ is a Markov process described by the stochastic differential equation
$$
d\theta_t = dB_t+ \cot(\theta_t) dt,
$$
where $B_t$ is the standard Wiener process on $\mathbb{R}$.
\end{Thm}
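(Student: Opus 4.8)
The plan is to identify $\theta$ with a geodesic radial coordinate on $S^3$ and to compute the radial part of the Laplace--Beltrami operator. Writing $g \in SU(2)$ as a unit quaternion, the relation $x = \mathrm{Re}(a) = \tfrac12\mathrm{Tr}(g) = \cos\theta$ shows that $\theta$ is precisely the Riemannian distance from the group unit $e$ for the round metric of radius one (the bi-invariant metric normalized so that $S^3$ has constant curvature $1$). In the corresponding geodesic polar (hyperspherical) coordinates centered at $e$, the metric takes the warped-product form
$$
ds^2 = d\theta^2 + \sin^2(\theta)\, d\Omega_2^2,
$$
where $d\Omega_2^2$ is the round metric on the unit $2$-sphere and $\theta \in [0,\pi]$.

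First I would recall that the Wiener process on a Riemannian manifold has infinitesimal generator $\tfrac12 \Delta$, where $\Delta$ is the Laplace--Beltrami operator. From the warped-product form above the volume density is $\sin^2(\theta)$ times the volume element of $S^2$, so for a function $f=f(\theta)$ depending on the radial coordinate alone,
$$
\Delta f = \frac{1}{\sin^2(\theta)}\,\d_\theta\!\big(\sin^2(\theta)\,\d_\theta f\big) = \d_\theta^2 f + 2\cot(\theta)\,\d_\theta f.
$$
Hence the radial part of $\tfrac12\Delta$ is the second-order operator $L = \tfrac12\d_\theta^2 + \cot(\theta)\,\d_\theta$, which is exactly the generator of the diffusion solving $d\theta_t = dB_t + \cot(\theta_t)\,dt$. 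Comparing generators (equivalently, applying It\^o's formula to $\theta = \theta(g)$ along the Wiener process and reading off the drift and diffusion coefficients) yields the stated equation.

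The point that requires care is the Markov property of the projected process $\theta_t$: \emph{a priori} the image of a Markov process under a map need not be Markov. Here it follows from symmetry. The isotropy group of $e$ acts on $S^3$ by isometries, its orbits are precisely the level sets $\{\theta = \mathrm{const}\}$, and it acts transitively on each of them; since Brownian motion commutes with isometries and starts at the fixed point $e$, the conditional law of the future increments of $\theta$ given the entire past of the Wiener process depends only on the current position through $\theta_t$. This is the standard ``radial part'' or skew-product argument, and establishing it cleanly---together with the behaviour at the poles $\theta = 0, \pi$, where $\cot(\theta)$ is singular and the drift is repulsive, exactly as for the Bessel process of the previous section---is the main technical obstacle; the remainder is the coordinate computation above.
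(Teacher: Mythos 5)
Your argument is correct, and it reaches the conclusion by a genuinely different route from the paper. The paper works entirely with explicit stochastic calculus on matrix entries: it writes the Wiener process on $SU(2)$ as the matrix It\^o equation $dg = g\bigl(\sum_i e_i\,dB^{(e_i)}_t\bigr) - \tfrac{3}{2}g\,dt$ for an orthonormal basis $\{e_i\}$ of $\mathfrak{su}(2)$, derives from it the closed scalar equation $dx_t = \sqrt{1-x_t^2}\,dB_t - \tfrac{3}{2}x_t\,dt$ for $x_t = \mathrm{Re}(a_t) = \cos(\theta_t)$, and then applies It\^o's lemma with $\theta = \arccos(x)$. You instead identify $\theta$ with the Riemannian distance from $e$, compute the radial part of $\tfrac12\Delta$ in geodesic polar coordinates, and match generators; your normalization is consistent with the paper's, since declaring the given $e_i$ orthonormal makes $SU(2)$ the unit round sphere in $\mathbb{R}^4$ and $\theta = \arccos\bigl(\tfrac12\mathrm{Tr}(g)\bigr)$ the geodesic distance (and the $-\tfrac32 g\,dt$ drift is exactly the It\^o--Stratonovich correction making the generator $\tfrac12\Delta$). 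Each approach has its advantages. Yours is more conceptual, transfers verbatim to $\mathbb{H}^3$ (giving the $\coth$ drift of Theorem \ref{H^3_coth}) and to general rank-one situations, and it addresses the Markov property of the projection explicitly via the isometric conjugation action --- a point the paper handles only implicitly, through the observation that the SDE for $x_t$ closes in $x_t$ alone (which requires checking the identity $(\mathrm{Im}\,b)^2 + (\mathrm{Re}\,b)^2 + (\mathrm{Im}\,a)^2 = 1 - x^2$ on $SU(2)$). The paper's computation, on the other hand, produces the intermediate equation for $x_t$ and the full system \eqref{system_S^3} as byproducts, and these are exactly what is needed for the Fokker--Planck comparison in the proof of Theorem \ref{conditional_S^3}; your route would still require that computation separately. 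Neither you nor the paper fully treats the singularity of the drift at $\theta = 0, \pi$, though your remark that it is repulsive, as for the Bessel process, is the right heuristic.
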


\begin{Thm}
The projection $a=x+iy$ of the Wiener process on $S^3$ is a Markov process described by the following system of stochastic differential equations
\begin{equation}  \label{system_S^3}
\begin{array}{lll}
dx_t & = & \frac{y_t}{\sqrt{x_t^2+y_t^2}} dB_t^{(1)} + \frac{x_t\sqrt{1-x_t^2-y_t^2}}{\sqrt{x_t^2+y_t^2}} dB_t^{(2)} - \frac{3}{2} x_t dt,\\
dy_t & = & -\frac{x_t}{\sqrt{x_t^2+y_t^2}} dB_t^{(1)} + \frac{y_t\sqrt{1-x_t^2-y_t^2}}{\sqrt{x_t^2+y_t^2}} dB_t^{(2)} - \frac{3}{2} y_t dt,
\end{array}
\end{equation}
where $B^{(1,2)}_t$ are two independent Wiener processes on $\mathbb{R}$.
\end{Thm}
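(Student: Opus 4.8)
\section*{Proof proposal for the final statement}

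The plan is to realize the Wiener process on $S^3$ as Brownian motion on the round unit sphere in $\mathbb{R}^4$ and to read off the system \eqref{system_S^3} from its generator $\tfrac12\Delta$. Writing $a=x+iy$ and $b=u+iv$, the identification $g\mapsto(x,y,u,v)$ is an isometry of $SU(2)$ with the bi-invariant metric onto the unit sphere $\{x^2+y^2+u^2+v^2=1\}\subset\mathbb{R}^4$; the normalization (radius one, generator $\tfrac12\Delta$) is fixed by consistency with Theorem \ref{SU2_cot}, whose drift $\cot\theta$ is exactly the radial drift $\tfrac12(n-1)\cot\theta$ of Brownian motion on $S^n$ for $n=3$. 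Under this identification the projection $a$ becomes the orthogonal projection onto the first two coordinates, with image the unit disc $D$. I will use the standard fact that, for Brownian motion with generator $\tfrac12\Delta$ and smooth functions $f,g$, the process $f(X_t)$ has drift $\tfrac12\Delta f$ while the coordinate martingales have covariation $d\langle f,g\rangle_t=\langle\nabla f,\nabla g\rangle\,dt$; applied to $f=x$ and $f=y$ this determines both the drift vector and the diffusion matrix of $(x_t,y_t)$.

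First I would compute the diffusion matrix. On the unit sphere the intrinsic gradient of the restriction of a Euclidean function is its tangential part, $\nabla_{S^3}f=\nabla_{\mathbb{R}^4}f-\langle\nabla_{\mathbb{R}^4}f,N\rangle N$ with $N=(x,y,u,v)$ the position vector. For the linear functions this gives $\nabla_{S^3}x=e_1-xN$ and $\nabla_{S^3}y=e_2-yN$, and a short computation using $x^2+y^2+u^2+v^2=1$ yields
\begin{equation*}
|\nabla_{S^3}x|^2=1-x^2,\qquad |\nabla_{S^3}y|^2=1-y^2,\qquad \langle\nabla_{S^3}x,\nabla_{S^3}y\rangle=-xy .
\end{equation*}
Hence the diffusion (co-metric) matrix of $(x_t,y_t)$ is $A=\left(\begin{smallmatrix} 1-x^2 & -xy \\ -xy & 1-y^2\end{smallmatrix}\right)$.

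Next the drift. Since $x$ and $y$ are restrictions of homogeneous harmonic polynomials of degree $1$ on $\mathbb{R}^4$, they are spherical harmonics on $S^3$ with $-\Delta_{S^3}$-eigenvalue $\ell(\ell+n-1)=1\cdot 3=3$, so $\Delta_{S^3}x=-3x$ and $\Delta_{S^3}y=-3y$; the drift vector is therefore $\tfrac12(\Delta x,\Delta y)=(-\tfrac32 x,-\tfrac32 y)$, matching \eqref{system_S^3}. It remains to exhibit a matrix $\sigma$ with $\sigma\sigma^\top=A$ equal to the one in \eqref{system_S^3}. This is cleanest in the polar frame $\hat r=(x,y)/\rho$, $\hat\phi=(-y,x)/\rho$ with $\rho=\sqrt{x^2+y^2}$: a direct check gives $A\hat r=(1-\rho^2)\hat r$ and $A\hat\phi=\hat\phi$, so $A=\hat\phi\hat\phi^\top+(1-\rho^2)\,\hat r\hat r^\top$, and one may take $\sigma=\bigl(-\hat\phi\mid\sqrt{1-\rho^2}\,\hat r\bigr)$, which is precisely the coefficient matrix of \eqref{system_S^3}.

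The one genuine subtlety, and the point I would treat most carefully, is the apparent singularity of $\sigma$ at the origin $x=y=0$, where the factors $1/\rho$ blow up. This is only a singularity of the chosen square root (the polar frame degenerates at $\rho=0$), not of the diffusion itself: the matrix $A$ and the drift are smooth on $D$, and the law of the projected process is determined by the generator $L=\tfrac12\sum_{i,j}A^{ij}\partial_i\partial_j-\tfrac32\bigl(x\,\partial_x+y\,\partial_y\bigr)$, insensitive to the particular measurable square root $\sigma$ of $A$. I would therefore conclude by observing that any solution of \eqref{system_S^3} has generator $L$ and hence realizes the law of the projected Wiener process, so that \eqref{system_S^3} indeed describes it.
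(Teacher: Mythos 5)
Your proof is correct, but it takes a genuinely different route from the paper's. The paper works on the Lie group side: it writes the Wiener process as the matrix It\^o equation $dg = g\bigl(\sum_i e_i\,dB^{(e_i)}_t\bigr)-\tfrac32 g\,dt$ for an orthonormal basis $\{e_i\}$ of $\mathfrak{su}(2)$, reads off SDEs for the entries $a_t,b_t$, hence for $x_t=\mathrm{Re}(a_t)$, $y_t=\mathrm{Im}(a_t)$ driven by three Brownian motions, and then repackages the noise into two independent Brownian motions via the covariance matrix. You instead work intrinsically from the generator $\tfrac12\Delta_{S^3}$: tangential gradients of the linear coordinate functions give the diffusion matrix $A=\left(\begin{smallmatrix}1-x^2 & -xy\\ -xy & 1-y^2\end{smallmatrix}\right)$, the degree-one spherical-harmonic eigenvalue gives the drift $-\tfrac32(x,y)$, and your eigen-decomposition $A=\hat\phi\hat\phi^\top+(1-\rho^2)\hat r\hat r^\top$ produces exactly the coefficient matrix of \eqref{system_S^3}. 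The two computations agree; note that your off-diagonal covariance $-xy$ is the one consistent with \eqref{system_S^3} and with the Fokker--Planck equation \eqref{FP2_S3}, whereas the correlation matrix displayed at the end of the paper's proof has $+x_ty_t$, which appears to be a sign typo. What your approach buys is the explicit diagonalization of $A$ in the polar frame (explaining where the $1/\rho$ factors come from) and a more careful treatment than the paper's of the removable singularity at $\rho=0$ via the martingale-problem formulation; what the paper's approach buys is that the same entrywise SDEs \eqref{system_S^3_ab} immediately also yield Theorem \ref{SU2_cot} and generalize to the $\mathbb{H}^3$ case. Two small points you should make explicit: fixing the metric normalization by appeal to Theorem \ref{SU2_cot} is slightly circular within the paper (that theorem is itself derived from this computation), so it is cleaner to say the normalization is the one making the basis $\{e_i\}$ orthonormal, i.e.\ the round metric of radius one; and your final step tacitly uses weak uniqueness for the generator $L$ on the disc, which holds here but deserves a citation or a remark.
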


\begin{Rmk}
In polar coordinates $\rho=\sqrt{x^2+y^2}, \phi=\arctan(y/x)$ the system of stochastic differential equations \eqref{system_S^3} acquires a beautiful form:

\begin{equation}
\begin{array} {ll}
d\rho &= \sqrt{1-\rho^2}d \tilde{B}^{(1)}_t + \frac{1-3\rho^2}{2\rho}dt, \\
d\phi &= \frac{1}{\rho} d \tilde{B}^{(2)}_t,
\end{array}
\end{equation}

where $d \tilde{B}^{(1)}_t$ and $d \tilde{B}^{(2)}_t$ are independent Wiener processes.  
\end{Rmk}

\begin{proof}
The Wiener process $B_t^{S^3}$ on $S^3 \cong SU(2)$  is described by the following matrix equation:
\begin{equation}   \label{eq:BM_S}
dg = g \left( \sum_{i=1}^3 e_i dB^{(e_i)}_t \right) -\frac{3}{2} g_t dt.    
\end{equation}
Here $\{ e_i\}$ are orthonormal generators of the Lie algebra 
$\mathfrak{su}(2)$:
\begin{equation} \nonumber
e_1= \left(  
\begin{array}{rr}
0   &   i \\
i   &   0
\end{array}
\right), \quad
e_2= \left(  
\begin{array}{rr}
0   &   -1 \\
1   &   0
\end{array}
\right), \quad
e_3= \left(  
\begin{array}{rr}
i   &   0 \\
0   &   -i
\end{array}
\right),
\end{equation}
and $B^{(e_i)}$ are independent Wiener processes. The drift term is introduced to preserve the determinant of $g$. 
Equation \eqref{eq:BM_S} implies the following stochastic differential equations for $a_t$ and $b_t$:
\begin{equation} \label{system_S^3_ab}
\begin{array}{lll}
d a_t &=& b_t (i d B^{(e_1)}_t + d B^{(e_2)}_t) + i a_t d B_t^{(e_3)} - \frac 3 2 a_t dt, \\
d b_t &=& a_t (i d B^{(e_1)}_t - d B^{(e_2)}_t) - i b_t d B_t^{(e_3)} - \frac 3 2 b_t dt. \\
\end{array}
\end{equation}
Then, the evolution of $x_t = Re(a_t)$ is given by:
\begin{equation} \label{system_S^3_x}
\begin{array}{lll}
d x_t &=& - Im(b_t) d B^{(e_1)}_t + Re(b_t) d B^{(e_2)}_t - Im(a_t) d B^{(e_3)}_t - \frac{3}{2} x_t dt \\
&=& \sqrt{1 - x^2_t}d B_t - \frac 3 2 x_t dt,  \\
\end{array}
\end{equation}
where $B_t$ is the standard Wiener process on $\mathbb{R}$. The second line is the stochastic process whose distribution is equal to the one of the first line. Applying It\^{o}'s Lemma to the second equation of \eqref{system_S^3_x} with $\theta = \arccos(x)$, we obtain the statement of Theorem \ref{SU2_cot}.

The system of stochastic differential equations for $x_t$ and $y_t = Im(a_t)$ given by \eqref{system_S^3} also follows from \eqref{system_S^3_ab}.
Note that the correlation matrix of the processes $x_t$ and $y_t$ is equal to
$$
\left(
\begin{array} {ll}
1 - x_t^2 & x_t y_t \\
x_t y_t & 1 - y_t^2
\end{array}
\right) ,
$$
and this defines the coefficients in front of normalized independent Wiener processes $B_t^{(1)}$ and $B_t^{(2)}$.
\end{proof}


The following theorem establishes a relation between the system of stochastic differential equation \eqref{system_S^3} and Duistermaat-Heckman measures of conjugacy classes:

\begin{Thm}
 \label{conditional_S^3}
The conditional probability density for $y_t$ for a fixed value of $x_t$ is given by the normalized Duistermaat-Heckman measure:
\begin{equation} \label{conditional_S^3_eq}
\rho_{y_t} (y| x_t=\cos(\theta)) = \frac{1}{{\rm Vol}(\mathcal{C}_\theta)} {\rm DH}_\theta.
\end{equation}
\end{Thm}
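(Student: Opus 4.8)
The plan is to exploit the conjugation symmetry of the bi-invariant Wiener process rather than to integrate the explicit system \eqref{system_S^3} directly. Since the metric on $S^3 \cong SU(2)$ is bi-invariant, for any fixed $h \in SU(2)$ the conjugated process $h g_t h^{-1}$ is again a Wiener process issued from $e$ (conjugation fixes $e$); hence the one-time law $\mu_t$ of $g_t$ is invariant under conjugation. I would record this first, noting that it holds for every $t$ and that the smoothness and strict positivity of the heat kernel guarantee that $\mu_t$ has a density, so that all the conditional laws below are well defined.

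Next I would observe that $x_t = Re(a_t) = \tfrac12 {\rm Tr}(g_t)$ is a class function, so the level set $\{x_t = \cos\theta\}$ is exactly the conjugacy class $\mathcal{C}_\theta$. Disintegrating $\mu_t$ along the trace map, the conditional measure $\mu_t(\,\cdot \mid \mathcal{C}_\theta)$ is a conjugation-invariant probability measure on $\mathcal{C}_\theta$. Because the conjugation action of $SU(2)$ is transitive on $\mathcal{C}_\theta$, with stabilizer the maximal torus so that $\mathcal{C}_\theta \cong SU(2)/T \cong S^2$, there is a unique such measure, namely the normalized canonical volume $\tfrac{1}{{\rm Vol}(\mathcal{C}_\theta)}\omega_\theta$. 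This is the crux of the argument: conditionally on its trace, $g_t$ is uniformly distributed on the conjugacy class, independently of $t$, which is precisely why the right-hand side of \eqref{conditional_S^3_eq} carries no time dependence even though the marginal law of $x_t$ does.

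Finally I would push forward. On $\mathcal{C}_\theta$ the real part $Re(a) = \cos\theta$ is constant, so the law of $y_t = Im(a_t)$ conditioned on $x_t = \cos\theta$ is the image of $\tfrac{1}{{\rm Vol}(\mathcal{C}_\theta)}\omega_\theta$ under $a$. By \eqref{DH_SU(2)} we have $a_*(\omega_\theta) = 2\pi\chi_{[-\sin(\theta),\sin(\theta)]}\,dy$, and dividing by ${\rm Vol}(\mathcal{C}_\theta) = 4\pi\sin(\theta)$ yields $\tfrac{1}{2\sin(\theta)}\chi_{[-\sin(\theta),\sin(\theta)]}\,dy$, which is exactly $\tfrac{1}{{\rm Vol}(\mathcal{C}_\theta)}{\rm DH}_\theta$. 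This establishes \eqref{conditional_S^3_eq}.

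The main obstacle I anticipate is making the disintegration and the uniqueness step fully rigorous: one must verify that $\mu_t$ conditioned on a trace level set genuinely inherits conjugation invariance — a standard but not entirely automatic consequence of the invariance of $\mu_t$ together with the invariance of the conditioning $\sigma$-algebra generated by the class function $x_t$ — and that the conjugation action is transitive with compact stabilizer, so that the Haar-type uniqueness of the invariant measure on the homogeneous space $SU(2)/T$ applies. Should one prefer to sidestep disintegration, an alternative is to verify directly from \eqref{system_S^3} that the uniform density on $[-\sin(\theta),\sin(\theta)]$ is the stationary conditional density, by computing the generator of $(x_t,y_t)$ and checking the Fokker--Planck balance in the $y$-fiber at fixed $x$; this route is more computational, and I would keep the symmetry argument as the primary one.
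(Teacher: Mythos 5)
Your argument is correct, but it takes a genuinely different route from the paper. The paper proves the theorem computationally: it writes down the Fokker--Planck equation \eqref{FP1_S3} for $x_t$ alone and the Fokker--Planck equation \eqref{FP2_S3} for the joint process $(x_t,y_t)$ derived from \eqref{system_S^3}, checks that the product ansatz \eqref{FPanz_S3} (marginal in $x$ times the uniform density on $[-\sqrt{1-x^2},\sqrt{1-x^2}]$ in $y$) makes the two equations coincide, and then propagates this form from an initial condition concentrated near $e$ as $\eps\to 0$. This is essentially the ``Fokker--Planck balance in the $y$-fiber'' that you relegate to a fallback at the end of your proposal. Your primary argument instead exploits that conjugation by any $h\in SU(2)$ is an isometry of the bi-invariant metric fixing $e$, so the law $\mu_t$ of $g_t$ is conjugation-invariant; since $x_t=\tfrac12{\rm Tr}(g_t)$ is a class function, the conditional law on the level set $\mathcal{C}_\theta$ is the unique conjugation-invariant probability measure on $SU(2)/T\cong S^2$, whose pushforward under $a$ is the normalized measure \eqref{DH_SU(2)} (by Archimedes, or by the Proposition). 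Both arguments are sound. Yours is more conceptual: it explains \emph{why} the right-hand side of \eqref{conditional_S^3_eq} is time-independent, it does not require the explicit SDEs at all, and it generalizes immediately to any compact group with bi-invariant metric (and, with the observation that ${\rm tr}(g_1g_2^*)$ is invariant under simultaneous $SU(2)$-conjugation, to the hyperbolic case of Theorem \ref{conditional_H^3} as well). The paper's computation, on the other hand, is self-contained, verifies the SDEs that are the paper's main objects, and sidesteps the measure-theoretic care you correctly flag (regularity of the disintegration, invariance of the conditional kernels, and the identification of the normalized quasi-Poisson volume $\omega_\theta/{\rm Vol}(\mathcal{C}_\theta)$ with the unique invariant probability measure on the class --- all true, but each needing a sentence of justification). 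If you write up the symmetry route, do make those three points explicit; otherwise the argument is complete.
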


\begin{proof} [Proof of Theorem \ref{conditional_S^3}]
We compare two Fokker-Plank equations 
on evolution of the probability densities for $x_t$ and for the combined process  $(x_t, y_t)$.
The Fokker-Planck equation derived from \eqref{system_S^3_x} reads
\begin{equation} \label{FP1_S3}
\frac{d}{d t} p_x 
= \frac{1}{2} (1 - x^2) \frac{\d^2}{\d x^2} p_x 
- \frac{1}{2} x \frac{\d}{\d x} p_x + \frac{1}{2} p_x
\end{equation}
for $p_x = p_x(x,t)$ the probability density of $x_t$.

The Fokker-Planck equation describing the distribution $p_{x,y} = p_{x,y}(x,y,t)$ for for the process $(x_t, y_t)$ and derived from \eqref{system_S^3} is as follows:
\begin{equation} \label{FP2_S3}
\begin{array}{lll}
\frac{d}{d t} p_{x,y} 
&=& 
\frac 1 2 (1-x^2) \frac{\d^2}{\d x^2} p_{x,y} + 
\frac 1 2 (1-y^2) \frac{\d^2}{\d y^2} p_{x,y} - xy \frac{\d^2}{\d x \d y} p_{x,y} \\
&& - \frac 3 2 x \frac{\d}{\d x} p_{x,y} -\frac 3 2 y \frac{\d}{\d y} p_{x,y}. 
\end{array}
\end{equation}

Equations \eqref{FP1_S3} and \eqref{FP2_S3} coincide if $p_{x,y}$ is of the form 
\begin{equation} \label{FPanz_S3}
p_{x,y}(x,y,t) = p_x(x,t) \times \frac{1}{2 \sqrt{1 - x^2}} \chi_{[-\sqrt{1 - x^2},\sqrt{1 - x^2}]}(y).
\end{equation}
Hence, if the initial conditions are of this form, the solution $p_x(x,t)$ of \eqref{FP1_S3} yields a solution of \eqref{FP2_S3} via \eqref{FPanz_S3}. Consider the initial condition
$$p_{0}(x,y) = \delta_{\sqrt{1- \eps^2}}(x) \times \tfrac{1}{2 \eps} \chi_{[-\eps,\eps]}(y)$$
which is of the form \eqref{FPanz_S3}. Then, $p_{x,y}(x,y,t)$ is also of the form \eqref{FPanz_S3}. It holds when $\eps \to 0$ as well, and this implies equation \eqref{conditional_S^3_eq} for the conditional probability for the projection of the Wiener process on $S^3$ starting at the group unit.

\end{proof}

\section{Wiener process on $\mathbb{H}^3$ and moment maps in the sense of Lu}

Similar to the previous sections, we now replace the Euclidean space $\mathbb{R}^3$ with the hyperbolic space $\mathbb{H}^3$.
A good model of $\mathbb{H}^3$ is the set of positive definite Hermitian matrices of unit determinant:

$$
\mathbb{H}^3 \cong \left\{
g= \left(  
\begin{array}{rr}
a& b \\
\bar{b} & c
\end{array}
\right); a,c \in \mathbb{R}_+, b \in \mathbb{C}, ac - |b|^2 =1 \right\} .
$$
It carries a Lu-Weinstein Poisson structure \cite{LW90} and a quasi-Poisson structure \cite{AKSM02}. The leaves for both structures are conjugacy classes under the $SU(2)$-action. Generic leaves are 2-spheres of elements of $g \in \mathbb{H}^3$ with fixed trace. The conjugacy class of the unit matrix $e$ consists of one point.

A matrix $g \in \mathbb{H}^2$ has positive eigenvalues $\Lambda, \Lambda^{-1}$ with $\Lambda \geq 1$. We denote $\lambda = \log(\Lambda)$ and (by abusing notation) we denote $\lambda: \mathbb{H}^3 \to \mathbb{R}$ the corresponding map. We also consider the maps
$a,c: \mathbb{H}^3 \to \mathbb{R}$ mapping an element $g$ to its diagonal entries $a$ and $c$.

Again, $SU(2)$ conjugacy classes in $\mathbb{H}^3$ carry canonical volume forms corresponding to Lu-Weinstein Poisson structures and to quasi-Poisson structures. We denote the volume form corresponding to the quasi-Poissons structure by $\omega_\lambda$. 
The analogue of the Duistermaat-Heckman measure is given by:
\begin{Pro}
Let $\mathcal{C}_\lambda = \{ g \in \mathbb{H}^3; {\rm Tr}(g)=e^\lambda + e^{-\lambda}\} $ be a leaf in $\mathbb{H}^3$. Then,
$$
(a,c)(\mathcal{C}_\lambda) =\{ (a, 2\cosh(\lambda)-a); a\in [e^{-\lambda}, e^\lambda] \} .
$$
Furthermore, 
\begin{equation} \label{DH_H^3}
{\rm DH}_\lambda:= a_*(\omega_\lambda) = 2\pi \chi_{[e^{-\lambda}, e^\lambda]} da
\end{equation}
\end{Pro}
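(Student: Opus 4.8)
The plan is to prove the two assertions in turn: the image $(a,c)(\mathcal{C}_\lambda)$ is a purely algebraic matter, whereas the measure ${\rm DH}_\lambda$ requires identifying the canonical volume form in adapted coordinates.

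For the image, I would argue directly from the defining constraints. Every $g \in \mathcal{C}_\lambda$ satisfies $a + c = {\rm Tr}(g) = 2\cosh(\lambda)$, so $c = 2\cosh(\lambda) - a$ and the image automatically has the stated form once the range of $a$ is found. Positive definiteness forces $a, c > 0$, while $\det g = 1$ gives $ac = 1 + |b|^2 \geq 1$. Substituting $c = 2\cosh(\lambda) - a$ turns $ac \geq 1$ into $a^2 - 2\cosh(\lambda)\, a + 1 \leq 0$, whose roots are $\cosh(\lambda) \pm \sinh(\lambda) = e^{\pm\lambda}$; hence $a \in [e^{-\lambda}, e^\lambda]$, and conversely every such value is realized by choosing $b$ appropriately. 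This proves the first claim.

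For the measure, I would parametrize the leaf by the conjugation map $U \mapsto U\, {\rm diag}(e^\lambda, e^{-\lambda})\, U^{-1}$ with $U = \left(\begin{smallmatrix} u & v \\ -\bar v & \bar u \end{smallmatrix}\right) \in SU(2)$. A direct computation gives $a = e^\lambda|u|^2 + e^{-\lambda}|v|^2$ and $b = -2\sinh(\lambda)\, uv$. Setting $s = |u|^2 \in [0,1]$ and letting $\psi \in [0,2\pi)$ be the phase of $b$, one gets $a = e^{-\lambda} + 2\sinh(\lambda)\, s$, so that $da = 2\sinh(\lambda)\, ds$ and $(a,\psi)$ are coordinates on the generic leaf. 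Moreover the diagonal torus acts on $\mathcal{C}_\lambda$ by conjugation, fixing $a$ and $c$ and rotating $\psi \mapsto \psi + 2\gamma$. The key step is then to show that in these coordinates the canonical volume form is $\omega_\lambda = da \wedge d\psi$; granting this, the push-forward under $a$ simply integrates out $\psi$ over $[0,2\pi)$ and yields $a_*\omega_\lambda = 2\pi\, \chi_{[e^{-\lambda}, e^\lambda]}\, da$. As a consistency check, the total mass $2\pi(e^\lambda - e^{-\lambda}) = 4\pi\sinh(\lambda)$ equals the symplectic volume of $\mathcal{C}_\lambda$.

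The main obstacle is exactly the identification $\omega_\lambda = da \wedge d\psi$, i.e. the Duistermaat-Heckman (Archimedes) flatness statement in the hyperbolic setting. I would establish it by computing the canonical quasi-Poisson volume form of \cite{AKSM02, AMW02} directly in the conjugation coordinates above. Invariance under the torus conjugation already forces $\omega_\lambda = f(s)\, ds \wedge d\psi$ for some function $f$, so the entire content is that $f$ is the constant $2\sinh(\lambda)$, making $\omega_\lambda = 2\sinh(\lambda)\, ds \wedge d\psi = da \wedge d\psi$. This is the genuine geometric input; it is the exact analogue of the round-sphere identity $\omega = d\phi \wedge dz$ and of the $SU(2)$ computation \eqref{DH_SU(2)}, to which it is formally related by the substitution $\theta \mapsto i\lambda$ (under which $\sin(\theta) \mapsto i\sinh(\lambda)$ and $\cos(\theta) \mapsto \cosh(\lambda)$). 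Everything else reduces to routine coordinate bookkeeping.
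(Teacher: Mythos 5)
The paper states this Proposition without proof, so there is no in-paper argument to compare against; I assess your proposal on its own terms. Your first claim is handled completely and correctly: the trace forces $c = 2\cosh(\lambda) - a$, the determinant constraint $ac = 1 + |b|^2 \geq 1$ becomes $a^2 - 2\cosh(\lambda)a + 1 \leq 0$ with roots $e^{\pm\lambda}$, and surjectivity onto $[e^{-\lambda}, e^{\lambda}]$ follows by choosing $|b|^2 = ac - 1$. The parametrization $U\,\mathrm{diag}(e^{\lambda}, e^{-\lambda})\,U^{-1}$, the formulas $a = e^{-\lambda} + 2\sinh(\lambda)s$ and $b = -2\sinh(\lambda)uv$, and the torus-invariance reduction $\omega_\lambda = f(s)\,ds\wedge d\psi$ are all correct as well.

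The second claim, however, is not actually proved. You correctly isolate the identity $\omega_\lambda = da\wedge d\psi$ (equivalently, $f \equiv 2\sinh(\lambda)$ in view of $da = 2\sinh(\lambda)\,ds$) as the genuine geometric input, but you then only announce that you \emph{would} compute the canonical quasi-Poisson volume form of \cite{AKSM02, AMW02} in these coordinates --- the computation is never carried out, and no explicit formula for the bivector or for the 2-form on the leaf is invoked from which $f$ could be read off. Since the constancy of $f$ \emph{is} the content of \eqref{DH_H^3} (everything else is the coordinate bookkeeping you yourself call routine), the argument has a gap at precisely the one step that carries the mathematical weight. The supporting evidence you offer does not fill it: the substitution $\theta \mapsto i\lambda$ relating \eqref{DH_H^3} to \eqref{DH_SU(2)} is a heuristic, and the mass check $2\pi(e^{\lambda} - e^{-\lambda}) = 4\pi\sinh(\lambda)$ is circular, since the paper obtains ${\rm Vol}(\mathcal{C}_\lambda)$ \emph{from} \eqref{DH_H^3} rather than independently. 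To close the gap you need either an explicit expression for the quasi-Poisson structure on $\mathbb{H}^3$ (or the 2-form on the leaf viewed as a quasi-Hamiltonian or Lu-Hamiltonian $SU(2)$-space) evaluated in the coordinates $(s,\psi)$, or a citation to a place where this hyperbolic Duistermaat--Heckman density is actually computed.
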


The corresponding normalized measure is of the form
$$
\frac{1}{{\rm Vol}(\mathcal{C}_\lambda)} {\rm DH}_\lambda = \frac{1}{2 \sinh(\lambda)} \chi_{[e^{-\lambda}, e^\lambda]} da .
$$

The space $\mathbb{H}^3$ carries the standard hyperbolic metric 
$$
d(g_1, g_2) = \arccosh(\tfrac{1}{2}\mathrm{tr}(g_1 g^*_2)).
$$ 
and we consider the Wiener process on $\mathbb{H}^3$ under this metric which starts at the unit element $e$. One can write it explicitly in local coordinates, {\it{e.g.}} given in \cite{Cos01}.

%


As before, we will be comparing two projections of this 3-dimensional Wiener process, the first one is under the (logarithmic) eigenvalue map $\lambda: \mathbb{H}^3 \to \mathbb{R}$ and the second one is under the map $(\tfrac{a+c}{2} = w,c): \mathbb{H}^3 \to \mathbb{R}^2$. These projections have the following properties:

\begin{Thm} \label{H^3_coth}
The projection $\lambda$ of the Wiener process on $\mathbb{H}^3$ is a Markov process described by the stochastic differential equation
$$
d\lambda_t = dB_t+ \coth(\lambda_t) dt,
$$
where $B_t$ is the standard Wiener process on $\mathbb{R}$.
\end{Thm}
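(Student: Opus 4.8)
The plan is to recognize the projection $\lambda$ as the radial part of the Wiener process -- that is, as its hyperbolic distance from the pole $e$ -- and then to identify the generator of $\lambda_t$ with the radial part of the Laplace--Beltrami operator on $\mathbb{H}^3$. The starting observation is that $\lambda$ literally measures distance from the unit: since $g$ has eigenvalues $e^{\pm\lambda}$ we have $\mathrm{tr}(g)=2\cosh(\lambda)$, and because $g$ is Hermitian positive definite we have $g^*=g$, so the distance formula gives $d(g,e)=\arccosh(\tfrac12\mathrm{tr}(g))=\arccosh(\cosh\lambda)=\lambda$. Thus $\lambda_t=d(g_t,e)$ is exactly the radial process of the Brownian motion started at $e$.

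Next I would exploit the rotational symmetry. The group $SU(2)$ acts on $\mathbb{H}^3$ by conjugation $g\mapsto ugu^*$; this action is by isometries of $d$, it fixes $e$, and it acts transitively on each level set $\mathcal{C}_\lambda$ of the distance function. Hence $(\mathbb{H}^3,d)$ is a rotationally symmetric (model) space about $e$, and in geodesic polar coordinates its metric takes the form $d\lambda^2+f(\lambda)^2\,g_{S^2}$, where $g_{S^2}$ is the round metric on the unit $2$-sphere, with warping function $f(\lambda)=\sinh(\lambda)$ and $\mathcal{C}_\lambda$ the geodesic sphere of radius $\lambda$. In these coordinates the Laplace--Beltrami operator is
$$
\Delta=\frac{\d^2}{\d\lambda^2}+\frac{2f'(\lambda)}{f(\lambda)}\frac{\d}{\d\lambda}+\frac{1}{f(\lambda)^2}\Delta_{S^2}=\frac{\d^2}{\d\lambda^2}+2\coth(\lambda)\frac{\d}{\d\lambda}+\frac{1}{\sinh^2(\lambda)}\Delta_{S^2}.
$$

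The Wiener process has generator $\tfrac12\Delta$, in the same normalization used on $S^3$ (where the identical computation with $f=\sin$ reproduces Theorem \ref{SU2_cot}). Since $\tfrac12\Delta$ sends $SU(2)$-invariant, i.e.\ radial, functions to radial functions, the projected process $\lambda_t$ is again Markov, with generator the radial part $\tfrac12\bigl(\tfrac{\d^2}{\d\lambda^2}+2\coth(\lambda)\tfrac{\d}{\d\lambda}\bigr)$. Comparing with the generator $\tfrac12\tfrac{\d^2}{\d\lambda^2}+h(\lambda)\tfrac{\d}{\d\lambda}$ of the diffusion $d\lambda_t=dB_t+h(\lambda_t)\,dt$ forces $h(\lambda)=\coth(\lambda)$, which is the claim. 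As an alternative, paralleling the derivation of Theorem \ref{SU2_cot} from \eqref{eq:BM_S}, one may realize the process as $g_t=h_th_t^*$ with $h_t$ a left-invariant Brownian motion on $SL(2,\mathbb{C})$ whose noise lies in the Hermitian directions, compute $d\,\mathrm{tr}(g_t)$ by It\^o, set $\mathrm{tr}(g_t)=2\cosh(\lambda_t)$, and apply It\^o once more; I expect the geometric route above to be cleaner.

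I expect the main obstacle to be the rigorous justification that the radial projection is Markov with the stated generator, since the polar coordinates degenerate at the pole $\lambda=0$. One must check that the diffusion is well defined there: the drift $\coth(\lambda)\sim 1/\lambda$ near the origin is precisely the entrance behaviour that keeps the process inside $(0,\infty)$, in exact analogy with the $3$-dimensional Bessel process of Theorem \ref{thm:Bessel_R^3}. A secondary point requiring verification is the curvature normalization -- that the trace metric $d$ has constant curvature $-1$, so that $f=\sinh$ with no rescaling; this can be confirmed either from the second-order expansion of $d(g,e)$ near $e$, or as a consistency check by noting that the geodesic sphere $\mathcal{C}_\lambda$ has Riemannian area $4\pi\sinh^2(\lambda)$, whence $\tfrac12\Delta\lambda=\tfrac12\tfrac{d}{d\lambda}\log\bigl(4\pi\sinh^2(\lambda)\bigr)=\coth(\lambda)$, recovering the same drift.
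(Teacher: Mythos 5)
Your argument is correct, but it reaches the drift $\coth(\lambda_t)$ by a genuinely different route than the paper. The paper first establishes the joint SDE \eqref{system_H^3} for $(w_t,c_t)$ with $w=\tfrac12\mathrm{tr}(g)$ directly from the matrix SDE for the Wiener process on $\mathbb{H}^3$ (the analogue of \eqref{eq:BM_S}), and then obtains Theorem \ref{H^3_coth} by a single application of It\^o's lemma with $\lambda=\arccosh(w)$ to the first equation of \eqref{system_H^3}: from $dw_t=\sqrt{w_t^2-1}\,dB_t^{(1)}+\tfrac32 w_t\,dt$ one gets $d\lambda_t=dB_t^{(1)}+\bigl(\tfrac{3w}{2\sqrt{w^2-1}}-\tfrac{w}{2\sqrt{w^2-1}}\bigr)dt=dB_t^{(1)}+\coth(\lambda_t)\,dt$. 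This is essentially the ``alternative'' you sketch in passing and then set aside. Your primary route --- identifying $\lambda_t$ as the radial process $d(g_t,e)$ and reading off the drift from the radial part $\tfrac{\d^2}{\d\lambda^2}+2\coth(\lambda)\tfrac{\d}{\d\lambda}$ of the Laplace--Beltrami operator in geodesic polar coordinates --- is cleaner conceptually: it makes the Markov property of the projection transparent (the generator preserves $SU(2)$-invariant functions), and it treats $\mathbb{R}^3$, $S^3$, $\mathbb{H}^3$ uniformly via the warping functions $r$, $\sin$, $\sinh$, recovering Theorems \ref{thm:Bessel_R^3} and \ref{SU2_cot} for free. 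What the paper's route buys is that the It\^o computation for $(w_t,c_t)$ is needed anyway for Theorem \ref{H^3_cw} and the main result Theorem \ref{conditional_H^3}, so Theorem \ref{H^3_coth} comes at essentially zero marginal cost; it also fixes the curvature normalization automatically through the explicit matrix SDE, whereas in your approach the verification that the trace metric has curvature exactly $-1$ (so that $f=\sinh$ with no rescaling) and the behaviour at the degenerate pole $\lambda=0$ are genuine obligations --- you correctly identify both and your proposed checks (the area $4\pi\sinh^2(\lambda)$ of $\mathcal{C}_\lambda$ giving $\tfrac12\Delta\lambda=\coth(\lambda)$, and the Bessel-type entrance behaviour of the drift near $0$) do discharge them.
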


\begin{Thm} \label{H^3_cw}
The projection $(w,c)$ of the Wiener process on $\mathbb{H}^3$ is a Markov process described by the following system of stochastic differential equations
\begin{equation}  \label{system_H^3}
\begin{array}{lll}
dw_t & = & \sqrt{w_t^2-1} \, dB_t^{(1)} + 
\frac{3}{2} w_t dt,\\
dc_t & = &
\frac{c_t w_t -1}{\sqrt{w_t^2-1}} dB_t^{(1)} +
\frac{\sqrt{2 c_t w_t -c_t^2 - 1}}{\sqrt{w_t^2-1}} dB_t^{(2)} + \frac{3}{2} c_t dt,
\end{array}
\end{equation}
where $B^{(1,2)}_t$ are two independent Brownian motions on $\mathbb{R}$.
\end{Thm}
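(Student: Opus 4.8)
The plan is to mimic the $S^3$ argument of the previous section, realizing $\mathbb{H}^3 \cong SL(2,\mathbb{C})/SU(2)$ via the correspondence $m = gg^*$ with $g\in SL(2,\mathbb{C})$, and writing the Wiener process on $\mathbb{H}^3$ as a matrix stochastic differential equation. Let $p_1,p_2,p_3$ be the orthonormal basis of $\mathfrak{p}$ (traceless Hermitian matrices) given by the Pauli matrices, so that $p_i^2=\mathrm{Id}$ and the completeness relation $\sum_i (p_i)_{\alpha\beta}(p_i)_{\gamma\delta}=2\delta_{\alpha\delta}\delta_{\beta\gamma}-\delta_{\alpha\beta}\delta_{\gamma\delta}$ holds. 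First I would represent the Wiener process on $\mathbb{H}^3$ as the projection $m_t=g_tg_t^*$ of the $\mathfrak{p}$-horizontal left-invariant Brownian motion
$$ dg_t = g_t \circ \Big( \kappa \sum_{i=1}^3 p_i\, dW^{(i)}_t \Big), $$
where $\circ$ denotes the Stratonovich differential. The constant $\kappa$ is not free: the map $m=gg^*$ doubles the geodesic distance, and $\kappa=\tfrac12$ is fixed a posteriori by requiring the radial process to agree with Theorem \ref{H^3_coth}, which pins down both the drift and the diffusion rate of $w$ simultaneously.

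Next I would pass to the It\^o matrix equation for $m_t$. Applying the Leibniz rule in the Stratonovich calculus and converting to It\^o using $p_i^2=\mathrm{Id}$ gives
$$ dm_t = \sum_{i=1}^3 N_i\, dW^{(i)}_t + \tfrac32\, m_t\, dt, \qquad N_i = g_t\, p_i\, g_t^*, $$
with quadratic covariation $[\,dm_{kl},dm_{k'l'}\,]=(2\,m_{kl'}m_{k'l}-m_{kl}m_{k'l'})\,dt$, obtained from the completeness relation above. Two facts must be checked here: that $m_t$ is genuinely Markov, which holds because both the drift and the quadratic covariation are functions of $m_t$ alone even though the individual $N_i$ depend on $g_t$; and that the process stays on $\mathbb{H}^3$, i.e. $d(\det m_t)=0$, which follows from $\mathrm{Tr}(m^{-1}N_i)=\mathrm{Tr}(p_i)=0$ together with a short computation of the It\^o brackets.

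Then I would read off the coordinates $w=\tfrac12\mathrm{Tr}(m)=\tfrac12(a+c)$ and $c=m_{22}$. The drift $\tfrac32 m$ yields at once the drift terms $\tfrac32 w$ and $\tfrac32 c$ of \eqref{system_H^3}. Using the covariation formula together with the constraint $ac-|b|^2=1$, a direct computation gives $[dw,dw]=(w^2-1)\,dt$, $[dc,dc]=c^2\,dt$, and $[dw,dc]=(wc-1)\,dt$. Factoring this $2\times2$ diffusion matrix into lower-triangular (Cholesky) form in terms of two independent Brownian motions $B^{(1)}_t,B^{(2)}_t$ then reproduces exactly the coefficients $\sqrt{w^2-1}$, $\tfrac{wc-1}{\sqrt{w^2-1}}$, and $\tfrac{\sqrt{2wc-c^2-1}}{\sqrt{w^2-1}}$ of \eqref{system_H^3}, the last one because $c^2(w^2-1)-(wc-1)^2=(w^2-1)(2wc-c^2-1)$.

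The step I expect to be the main obstacle is the first one: identifying the \emph{correct} matrix equation and checking that $m_t$ is Markov. The naive candidate built from the fundamental vector fields $m\mapsto p_i m + m p_i$ of the $SU(2)$-action reproduces the right radial process, the right variance of $w$, and the right covariance of $(w,c)$, but gives the wrong drift and variance for $c$; it is the horizontal lift $m=gg^*$ that makes all three covariations and both drifts match at once. The Pauli completeness relation is the key computational tool throughout, and the normalization $\kappa=\tfrac12$ must be fixed against Theorem \ref{H^3_coth}.
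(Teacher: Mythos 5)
Your proposal is correct: the Stratonovich-to-It\^o conversion gives the drift $\tfrac32 m\,dt$, the Pauli completeness relation gives $[dm_{kl},dm_{k'l'}]=(2m_{kl'}m_{k'l}-m_{kl}m_{k'l'})\,dt$, and together with $ac-|b|^2=1$ this yields exactly the covariances $(w^2-1)$, $(wc-1)$, $c^2$ whose Cholesky factorization is \eqref{system_H^3} (note also $2wc-c^2-1=ac-1=|b|^2\ge 0$, so the square root is well defined). This is essentially the approach the paper intends — it gives no details beyond the remark that the theorem "follows directly from stochastic differential equations for the Wiener process on $\mathbb{H}^3$," in parallel with the $S^3$ matrix computation — and your horizontal lift $m=g_tg_t^*$ through $SL(2,\mathbb{C})$, with $\kappa=\tfrac12$ confirmed by the radial check against Theorem \ref{H^3_coth}, is the natural way to supply the missing matrix SDE since $\mathbb{H}^3$ is a symmetric space rather than a group.
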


Similar to Theorem \ref{system_S^3}, Theorem \ref{H^3_cw} follows directly from stochastic differential equations for the Wiener process on $\mathbb{H}^3$.
To obtain Theorem \ref{H^3_coth}, we apply It\^{o}'s lemma with $\lambda = \arccosh(w)$ to the first equation of \eqref{system_H^3}.

The following theorem establishes a relation between the system of stochastic differential equation \eqref{system_H^3} and Duistermaat-Heckman measures of conjugacy classes:

\begin{Thm} \label{conditional_H^3}
The conditional probability density for $c_t$ for a fixed value of $w_t=\cosh(\lambda_t)$ is given by the normalized Duistermaat-Heckman measure:
\begin{equation} \label{conditional_H^3_eq}
\rho_{c_t} (c| w_t=\cosh(\lambda_t)) = \frac{1}{{\rm Vol}(\mathcal{C}_\lambda)} {\rm DH}_\lambda.
\end{equation}
\end{Thm}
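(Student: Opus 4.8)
The plan is to mirror the Fokker--Planck comparison used in the proof of Theorem \ref{conditional_S^3}, now for the pair $(w_t, c_t)$. First I would write the Fokker--Planck equation for the marginal process $w_t$, which by the first line of \eqref{system_H^3} evolves autonomously with drift $\tfrac{3}{2}w$ and diffusion coefficient $w^2-1$. Rewriting the divergence form in non-divergence form, I expect
$$
\partial_t p_w = \tfrac{1}{2}(w^2-1)\,\partial_w^2 p_w + \tfrac{1}{2}w\,\partial_w p_w - \tfrac{1}{2}p_w,
$$
which is the hyperbolic counterpart of \eqref{FP1_S3}.

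Next I would assemble the diffusion matrix $D=\sigma\sigma^\top$ of the full system \eqref{system_H^3}. Using the identity $(cw-1)^2+(2cw-c^2-1)=c^2(w^2-1)$ one finds
$$
D=\begin{pmatrix} w^2-1 & cw-1 \\ cw-1 & c^2 \end{pmatrix},
$$
so that the joint Fokker--Planck equation for $p_{w,c}(w,c,t)$ is
$$
\partial_t p_{w,c}=-\tfrac{3}{2}\partial_w(w\,p_{w,c})-\tfrac{3}{2}\partial_c(c\,p_{w,c})+\tfrac{1}{2}\partial_w^2\big((w^2-1)p_{w,c}\big)+\partial_w\partial_c\big((cw-1)p_{w,c}\big)+\tfrac{1}{2}\partial_c^2\big(c^2\,p_{w,c}\big).
$$

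The heart of the argument is the factorized ansatz
$$
p_{w,c}(w,c,t)=p_w(w,t)\cdot\frac{1}{2\sqrt{w^2-1}}\,\chi_{[\,w-\sqrt{w^2-1},\,w+\sqrt{w^2-1}\,]}(c),
$$
whose second factor is exactly the normalized measure $\tfrac{1}{\mathrm{Vol}(\mathcal{C}_\lambda)}\mathrm{DH}_\lambda$ written in the variable $c$ via $w=\cosh\lambda$ and $e^{\pm\lambda}=w\pm\sqrt{w^2-1}$. Denoting the conditional factor by $q(w,c)$, which is independent of $t$, the claim reduces to the operator identity $\mathcal{L}_{w,c}[p_w q]=(\mathcal{L}_w p_w)\,q$, valid for every $p_w$; granting it, any solution of the marginal equation lifts to a solution of the joint equation. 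One then runs the marginal equation from the initial data $p_0(w,c)=\delta_{\cosh\eps}(w)\cdot\tfrac{1}{2\sinh\eps}\chi_{[e^{-\eps},e^{\eps}]}(c)$, which has the ansatz form and concentrates at $e$ as $\eps\to 0$, to obtain \eqref{conditional_H^3_eq}.

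The main obstacle is establishing this operator identity, since the support of $q$ depends on $w$. On the interior of the strip $|c-w|<\sqrt{w^2-1}$ the factor $q$ is independent of $c$, and a direct computation collapses the five terms above to $\tfrac{1}{2}(w^2-1)\partial_w^2+\tfrac{3}{2}w\partial_w$ acting on $p_w q$; feeding in the relations $h'/h=-w/(w^2-1)$ and the corresponding one for $h''$, where $h(w)=\tfrac{1}{2}(w^2-1)^{-1/2}$, then reproduces $(\mathcal{L}_w p_w)q$ exactly. What remains delicate is the boundary: differentiating $\chi_{[\,w-\sqrt{w^2-1},\,w+\sqrt{w^2-1}\,]}$ in $c$ and in $w$ produces Dirac masses and their derivatives supported on the moving lines $c=w\pm\sqrt{w^2-1}$, entering chiefly through the cross term $\partial_w\partial_c((cw-1)p_{w,c})$ and the second-order terms. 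I expect these distributional contributions to cancel, as in the spherical case, and verifying this cancellation — rather than the interior identity — is the real content of the proof.
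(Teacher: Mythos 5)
Your proposal is correct and takes essentially the same route as the paper's proof: the same marginal Fokker--Planck equation \eqref{FP1_H}, the same joint equation \eqref{FP2_H} (your divergence-form version expands to it, using the identity $(cw-1)^2+(2cw-c^2-1)=c^2(w^2-1)$ you note), the same factorized ansatz \eqref{FPanz_H}, and the same $\eps\to 0$ limit from a concentrated initial condition. The distributional boundary contributions from the moving support of the characteristic function, which you rightly single out as the delicate step, are not addressed in the paper either --- it simply asserts that the two equations coincide under the ansatz.
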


\begin{proof} [Proof of Theorem \ref{conditional_H^3}]
We use the same method as for Theorem \ref{conditional_S^3}, comparing two Fokker-Plank equations. 

The first Fokker-Planck equation is obtained from the first equation of the system \eqref{system_H^3}, and it is describing the evolution of the probability density of $w_t$:
\begin{equation} \label{FP1_H}
\frac{d}{d t} p_w 
= \frac{1}{2} (w^2-1) \frac{\d^2}{\d w^2} p_w 
+ \frac{1}{2} w \frac{\d}{\d w} p_w - \frac{1}{2} p_w,
\end{equation}
where $p_w = p_w(w,t)$.

The Fokker-Planck equation describing the process $(w_t, c_t)$  is derived from \eqref{system_H^3}, and it is as follows:
\begin{equation} \label{FP2_H}
\begin{array}{lll}
\frac{d}{d t} p_{w,c} 
&=& 
\frac 1 2 (w^2 - 1) \frac{\d^2}{\d w^2} p_{w,c} + 
 \frac{1}{2} c^2 \frac{\d^2}{\d c^2} p_{w,c} + (cw - 1) \frac{\d^2}{\d w \d c} p_{w,c} \\
&& + \frac 3 2 w \frac{\d}{\d w} p_{w,c}  + \frac 3 2 c \frac{\d}{\d c} p_{w,c}. 
\end{array}
\end{equation}

If $p_{w,c}$ is of the form 
\begin{equation} \label{FPanz_H}
p_{w,c}(w,c,t) = p_w(w,t) \times \frac{1}{2 \sqrt{w^2-1}} \chi_{[w - \sqrt{w^2 - 1},w + \sqrt{w^2 - 1}]}(c),
\end{equation}
implied by \eqref{conditional_H^3_eq}, 
then \eqref{FP1_H} and \eqref{FP2_H} coincide. 
The existence and the uniqueness of solution of \eqref{FP1_H} and \eqref{FP2_H} is guaranteed as they are the projections of Brownian motion. The equation \eqref{FP2_H} together with initial conditions 
\begin{equation*}
p_{0}(x,y) = \delta_{\sqrt{1+ \eps^2}}(x) \times \tfrac{1}{2 \eps} \chi_{[\sqrt{1+ \eps^2}-\eps,\sqrt{1+ \eps^2}+\eps]}(y),    
\end{equation*}
gives rise to a unique solution, and it is of the form \eqref{FPanz_H}. Thus, the solution is of this form for $\eps \to 0$ as well, and this finishes the proof.
\end{proof}

\Addresses

\end{document}